\newtheorem{theorem}{Theorem}
\def\bp{\overline{p}}
\def\ie{{\em i.e.},~}
\def\eg{{\em e.g.},~}
\def\etal{{\em et al.}~}
\def\etc{{\em etc.}~}
\def\pmax{p_{\sf max}}
\def\Dmax{D_{\sf max}}
\def\Umax{U_{\sf max}}
\def\wtp{\widetilde{p}}
\def\wtD{\widetilde{D}}
\def\vbp{\vec{\bp}}
\newcommand{\pd}[2]{\frac{\partial{#1}}{\partial{#2}}}
\newcommand{\pdd}[2]{\frac{\partial^2{#1}}{\partial{#2}^2}}
\newcommand{\pref}[1]{(\ref{#1})}
\renewcommand{\th}[1]{#1^{\rm th}}
\newcommand{\NEP}[1]{{\sf NEP}_{\sf #1}}
\newcommand\pkmax[1]{p_{#1 \sf max}}
\newcommand\Dkmax[1]{D_{#1 \sf max}}
\newcommand\Ukmax[1]{U_{#1 \sf max}}
\renewcommand{\vec}[1]{\mathbf{#1}}
\begin{document}

\title{Application Neutrality and \\ a Paradox of Side Payments}

\author{Eitan Altman \and St\'{e}phane Caron \and George Kesidis}

\maketitle

\begin{abstract}
The ongoing debate over net neutrality covers a broad set of issues related to the regulation of public networks. In two ways, we extend an idealized usage-priced game-theoretic framework based on a common linear demand-response mo\-del \cite{arXiv}. First, we study the impact of ``side payments'' among a plurality of Internet service (access) providers and content providers.  In the non-monopolistic case, our analysis reveals an interesting ``paradox'' of side payments in that overall revenues are reduced for those that receive them. Second, assuming different application types (\eg HTTP web traffic, peer-to-peer file sharing, media streaming, interactive VoIP), we extend this model to accommodate differential pricing among them in order to study the issue of application neutrality. Revenues for neutral and non-neutral pricing are compared for the case of two application types.
\end{abstract}

\section{Introduction}
\label{intro}

Different issues have been raised in the context of the net neutrality debate. 
	For Tim Berners-Lee \cite{tbl}, it means that ``if I pay to connect to the Net with a certain quality of service, and you pay to connect with that or greater quality of service, then we can communicate at that level.''
	For Tim Wu \cite{wu}, the main idea is that ``a maximally useful public information network aspires to treat all content, sites, and platforms equally.''
	According to \cite{hahn}, it ``usually means that broadband service providers charge consumers only once for Internet access, do not favor one content provider over another, and do not charge content providers for sending information over broadband lines to end users.''

These definitions raise different questions, including connectivity, non-dis\-cri\-mi\-na\-tion of application, based on type or origin, and network access pricing. Net neutrality is a subject involving a range of issues regarding the regulation of public networks \cite{chong}: (a) content neutrality, (b) blocking and rerouting, (c) denying IP-network interconnection, (d) network management, and (e) premium service fees. (b) pertains to providers discriminating packets in favor of their own or affiliated content, while (c) is related to agreements between last-mile and backbone providers. (d) has been a central argument for ISPs protesting the enforcement of net neutrality principles: they defend their right to manage their own networks, especially in order to deal with congestion issues (\eg due to high-volume peer-to-peer (P2P) traffic, see the ``Comcast v. the FCC'' decision \cite{comcast}). They claim that regulations would act as a disincentive for capacity expansion of their networks.

In this paper, considering usage-based revenues, we address issues from topic (a):
 side payments among providers and application neutrality. Massive copyright infringements led copyright holders to seek remuneration from ISPs, while congestion due to P2P file sharing led some providers to adopt not application-neutral policies (\eg Comcast throttling BitTorrent traffic) and to consider usage pricing (as a congestion penalty, for overage of a monthly quota, or for premium service, \eg \cite{NOMS10}). In what follows, we study side payments (from Internet Service (access) Providers (ISPs) to Content Providers (CPs), or in the reverse direction) and consider the impact of not application-neutral pricing independent of congestion.

That is, we assume consumers are, to some extent, willing to pay usage-dependent fees. Providers are then competing to settle on their usage-based prices, their goal being to maximize revenues coming from these charges. Note that a null price in the following does not mean a provider has no income, but rather that all their monthly revenues come from flat-rate priced service components. Study of the flat-rate regime is, however, out of the scope of this paper. See, \eg \cite{ciss08} for a comparison of both regimes for a simple model of congestion management.

The rest of the paper is organized as follows. We discuss related work in subsection \ref{subsec:related} and describe our problem framework in section \ref{sec:setup}. In section \ref{sec:side}, we study the impact of side payments on the competition between providers. We extend our framework in section \ref{sec:app} to analyze the effect of not application-neutral pricing by the ISPs. We conclude in section \ref{sec:conclusion}{ and discuss future work}.

\subsection{Related Work}
\label{subsec:related}

Previously, we considered  certain net neutrality related issues like side payments and premium service fees (e), limiting our consideration to monopolistic providers \cite{arXiv}. In the following, we extend this model to include competition between multiple identical providers (actually based on an idea sketched in Section IV of \cite{arXiv}).

{The validity of the ISPs' argument that net neutrality is a disincentive for bandwidth expansion has been studied in \cite{cheng}. In the proposed framework, incentives for broadband providers to expand infrastructure capacity turned out to be higher under net neutrality, with ISPs tending to under- or over-invest in the non-neutral regime.}

Ma \etal \cite{shapley1, shapley2} advocate the use of Shapley values as a fair way to share profits between providers. This approach yields Pareto optimality for all players, and expects in particular CPs, many of whom receive advertising revenues, to take part in network-capacity investments. However, this approach is coalitional and there are many obstacles to its real-life implementation.

{\cite{hande} deals with the question of side payments and deploys a framework in which CPs can subsidize consumers' connectivity costs. The authors compare an unregulated regime with a ``net neutral'' one where restrictions apply on the maximum price ISPs can charge content providers. They find out that, even in the neutral case, CPs can benefit from sharing revenue from end users if the latter are sufficiently price sensitive (and the cost of connectivity is low enough). Their framework is insightful, but does not take CP revenues into consideration.}

In \cite{walrand}, the authors address whether local ISPs should be allowed to charge remote CPs for the ``right'' to reach their end users (again, this is the side payment issue). Through study of a two-sided market, they determine when neutrality regulations are harmful depending on the parameters characterizing advertising rates and consumer price sensitivity\footnote{As in \cite{hande}, the outcome essentially depends on end users' price sensitivity, but here it is furthermore related to CP (advertising) revenues.}.

{

We study a similar issue in section \ref{sec:side}, yet with a significantly different scenario. In \cite{walrand}, the ISPs invest in network infrastructure, and then the CPs invest depending on the quality of the resulting network. The difference in time and scale of these investments justifies a leader-follower dynamics. Now in our model we suppose that the network is already deployed and that providers are setting up usage-based pricing to leverage ongoing revenues. For example, our scenario could be AT\&T beginning to charge Google{\footnote{An intention they voiced, \eg in \cite{att}: ``Now what [the content providers] would like to do is use my pipes free, but I ain't going to let them do that because we have spent this capital and we have to have a return on it.'' 
}} and its customers on a usage-based basis, while the leader-follower scenario would be closer to ISPs investing in optical fiber connections and high-quality video-on-demand providers coming to the new market.

Otherwise, both models share some assumptions, including a fixed number of players, homogeneous providers, and a uniform distribution of consumers among providers once the price war has ended. However, our CPs' revenues come from usage-based fees rather than advertising\footnote{We suggest in \ref{subsec:fw} how to take both into account.}, and our content consumption model is different: users subscribe to one CP and get all their content from him. This could be the case, \eg with online newspapers, music stores, video on demand, \etc Though this setting is more restrictive than a network where users are willing to access all CPs, it is of practical use and fits more to the homogeneity assumption.

}

The net neutrality debate is discussed in \cite{odlyzko} in light of historical precedents, especially dealing with the question of price discrimination. A conclusion about the way customers value the network is that connectivity is far more important than content. 

\section{Problem set-up}
\label{sec:setup}

Our model encompasses three types of players: the Internauts (end users), modeled collectively by their demand response, $n_1$ last-mile ISPs, and $n_2$ CPs. Consumers pay providers usage-dependent fees for service and content that requires one ISP and one CP. Providers then compete in a game to settle on their usage-based prices, which may turn out to be 0\$/byte, \ie only flat-rate fees would apply.

\subsection{Common Demand Response Model}

Let us denote by $p_{1i} \geq 0$ (resp. $p_{2j} \geq 0$) the usage-based price of the $\th{i}$ ISP (resp. $\th{j}$ CP). These prices act as disincentives on consumers' demand for content and bandwidth. We model this with a simple linear response: the amount users are ready to consume, given that they chose ISP $i$ and CP $j$, is
	\begin{equation*}
	D(p_{1i}, p_{2j}) = \Dmax - d_1 p_{1i} - d_2 p_{2j},
	\end{equation*}
where $d_k$ is the demand {\em sensitivity} to price paid to provider of type $k$ (the first subscript $k=1$ for ISP and $k=2$ for CP).
We are dealing here with a set of homogeneous users sharing the same response to price variations. The parameter $\Dmax$ reflects demand under pure flat-rate pricing.

Note that all providers may not measure demand on the same scale: ISPs focus on bandwidth consumption and express demand in bytes, while CPs are concerned with content consumption and/or advertising revenues, thus expressing demand in number of clicks or products sold (books, music albums, {\em etc.}). However, using a single demand metric is very convenient for our purposes here, and other metrics can be approximated from this one using an appropriate scaling factor.

In what follows, we furthermore suppose that users are only concerned with the total usage-based price they are charged, \ie they don't care whether they are giving money to an ISP or a CP. Equal demand sensitivies to price ensue,
\ie $d_1 = d_2 = d$. Since $D\geq 0$, define the maximum price
	\begin{equation*}
	\pmax \ := \ \frac{\Dmax}{d} \ \geq \  p_{1i} + p_{2j}.
	\end{equation*}

\subsection{Customer Stickiness}

As we suppose all providers of a given type propose the \emph{same} type/quality of content/service, user decisions are only based on price considerations. For example, if an ISP charges a price significantly lower than the other ISPs, in the long run all customers will choose it and the others will have no choice but to align their prices or opt out of the game. Therefore, our homogeneity hypothesis means all $n_1$ ISPs (and similarly all $n_2$ CPs) have roughly the same prices:
	\begin{equation*}
	\begin{array}{c c c c c c c}
	p_{11} & \approx & p_{12} & \approx & \cdots & \approx & p_{1 n_1}, \\
	p_{21} & \approx & p_{22} & \approx & \cdots & \approx & p_{2 n_2}.
	\end{array}
	\end{equation*}
As providers play the usage-based pricing game, first-order differences between these prices may appear (\eg \\ the $\th{i}$ ISP reducing his price by $\delta p_{1i}$ to attract new end users). Consumers are then more likely to go to the cheapest providers of each type, but price differences may be too small to convince all of them to move and some will stay with their current provider. This phenomenon is known as \emph{customer stickiness}, \emph{inertia} or \emph{loyalty}. To model it, we define the fraction $\sigma_{ki}$ of users committed to the $\th{i}$ provider of the $\th{k}$ type ($k=1$ for ISPs and 2 for CPs) as a function of $\vec{p}_k = (p_{k1}, \ldots, p_{k n_k})$, \ie $\sigma_{ki} := \sigma(i, \vec{p}_k)$. Properties expected of the ``stickiness function'' $\sigma$ include:
\begin{itemize}
	\item[(a)] 	$\sigma(i, \vec{p}_k) \geq 0$ and $\sum_{j=1}^{n_k} \sigma(j, \vec{p}_k) = 1$;
	\item[(b)] if $\vec{p}_k = (p, p, \ldots, p)$, then $\forall i \in \{1,..., n_k \}, \\ \sigma(i, \vec{p}_k) = 1/n_k$; and
	\item[(c)]  $p_{ki} < p_{kj} \Rightarrow \sigma(i, \vec{p}_k) > \sigma(j, \vec{p}_k)$.
\end{itemize}
In other words, the distribution is uniform if all providers of a given type charge exactly the same price, and ensures otherwise that cheaper providers attract more consumers. 
We chose the following model which satisfies these properties:
	\begin{equation}
	\label{eqn:sigma}
	\sigma(i, \vec{p}_k) \ = \ \frac{1/p_{ki}}{\sum_{j=1}^{n_k} 1/p_{kj}} \ =: \ \sigma_{ki},
	\end{equation}
The average usage-based price charged by a provider of type $k$ for a customer is then $\bp_k := \sum_i \sigma_{ki} p_{ki},$ \ie the harmonic mean of $\{p_{ki}\}_i$.

\subsection{Non-discriminating setting}

In a ``neutral'' setting with no side payments nor application discrimination, the $\th{i}$ ISP's expected usage-based revenue is given by
	\begin{equation*}
	U_{1i} \ = \ \sum_{j=1}^{n_2} \sigma_{1i}\,\sigma_{2j}\,D(p_{1i}, p_{2j})\,p_{1i} \\
		\ = \ \sigma_{1i}\,D(p_{1i}, \bp_2)\,p_{1i},
	\end{equation*}
and similarly $U_{2j}$ for the $\th{j}$ CP. Then,
	\begin{equation*}
	\pd{U_{1i}}{p_{1i}} \ = \ \left[-\sigma_{1i} \sum_{j \neq i} \frac{1}{p_{1j}}
		- \frac{1}{\pmax - p_{1i} - \bp_2}
		+ \frac{1}{p_{1i}}\right] U_{1i}.
	\end{equation*}
A necessary condition for an interior equilibrium is $\pd{U_{1i}}{p_{1i}}(\bp_1, \bp_2) = 0$, which in this case yields $(n_1+1) \bp_1 + \bp_2 = \pmax$. Combining with the similar condition for CPs, we get a non-singular linear system whose solution is
	\begin{equation*}
	\bp_k^* = \frac{n_{3-k}}{n_1 n_2 + n_1 + n_2}\,\pmax \quad \textrm{for } k=1,2.
	\end{equation*}
Demand and revenues are then given by
	\begin{eqnarray*}
	D^* & = & \frac{n_1 n_2}{n_1 n_2 + n_1 + n_2}\,\Dmax, \\
	U_{ki}^* & = & \frac{n_{3-k}^2}{(n_1 n_2 + n_1 + n_2)^2}\,\Umax,
	\end{eqnarray*}
for $k=1,2$ and $i \in \llbracket 1, n_k \rrbracket$. This solution is actually a Nash Equilibrium Point (NEP) since
	\begin{equation*}
	\pdd{U_{ki}}{p_{ki}}(\bp_1^*, \bp_2^*) \ = \ \frac{-2 U_{ki}^*}{\bp_k^* (\pmax - \bp_1^* - \bp_2^*)} \ < \ 0,
	\end{equation*}
which means it is a local maximum in revenue for all players.

As expected, customers benefit from competition among the providers. {See Figure \ref{fig:neutral-D}. }With 2 ISPs and 2 CPs, demand is only $50\%$ of its potential $\Dmax$, while it is about $70\%$ of $\Dmax$ with 5 ISPs and 5 CPs.
	\begin{figure}[ht]
	\begin{center}
	\subfloat[]{%
		\includegraphics[width=0.4\textwidth]{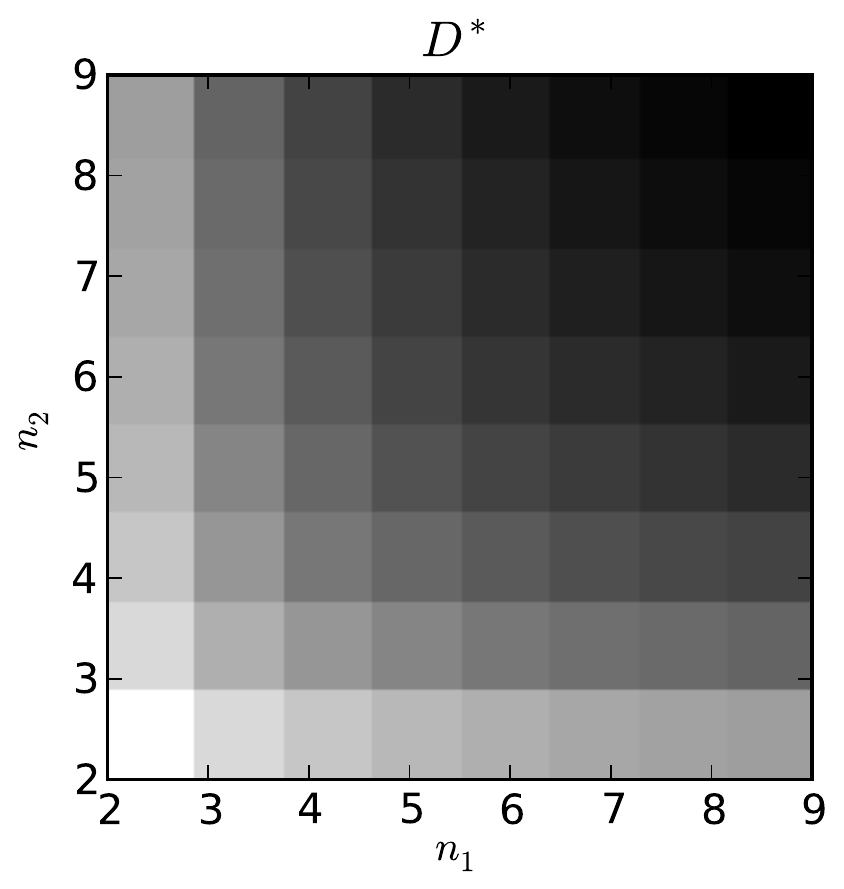}
		\label{fig:neutral-D}}
	\subfloat[]{%
		\includegraphics[width=0.4\textwidth]{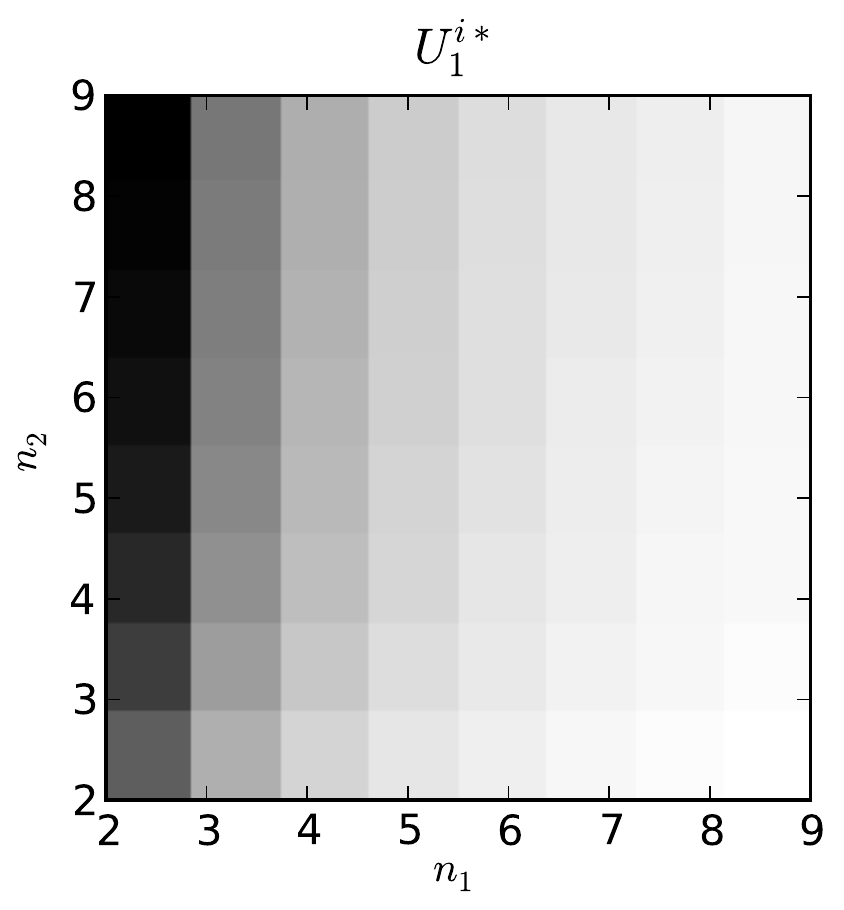}
		\label{fig:neutral-U}}
	\caption{Equilibrium demand and ISP revenues as functions of $n_1$ and $n_2$
	(CP revenues are symmetric). In (a), black means $85\%$ of $\Dmax$ while
	white is $50\%$ of $\Dmax$. In (b), black is $10\%$ of $\Umax$ and white
	$0.5\%$ of $\Umax$.}
	\end{center}
	\end{figure}
This base model also encompasses two expected behaviors: providers of one type benefit from increased competition among those of the other, while their revenues are significantly reduced by increased competition in their own group{ (see Figure \ref{fig:neutral-U})}. Note that competition in a provider's own group has much greater impact on their income than competition in the other.

\section{Side Payments}
\label{sec:side}

Suppose now that there are side payments between the two types of providers. We introduce a usage-based fee $p_s$ from the CPs to the ISPs. When $p_s > 0$, CPs remunerate the ISPs, \eg to support the bandwidth costs or share in advertising 
revenue\footnote{Advertising in delivered content is arguably not specifically requested by end-users.}. On the other hand, if $p_s < 0$, ISPs give money to the CPs, \eg for copyright remuneration. We suppose ISPs or CPs receive side payments collectively and ultimately share the aggregate amount proportionally to their customer shares. Hence, provider revenues become:
	\begin{eqnarray*}
	U_{1i} & = & \sigma_{1i} D(p_{1i}, \bp_2) (p_{1i} + p_s),
		\quad i \in \{ 1, ...,  n_1 \}, \\
	U_{2j} & = & \sigma_{2j} D(\bp_1, p_{2j}) (p_{2j} - p_s),
		\quad j \in \{ 1, ..., n_2 \},
	\end{eqnarray*}
where all demand and price factors are non-negative. 

It is expected that $p_s$ is \emph{not} a decision variable for any player or group of players. Indeed, since revenues are monotonic in $p_s$, those controlling it would always be incented to increase or decrease it (if they are ISPs or CPs respectively), leading the other players to opt out of the competition. Therefore, $p_s$ would normally be regulated and we will consider it a \emph{fixed} parameter from now on.

Necessary conditions for an interior equilibrium (null first-derivatives of revenues) yield:
	\begin{eqnarray}
	\label{side-equ1}
	\left[\frac{\bp_1}{\pmax - \bp_1 - \bp_2} - \frac{1}{n_1}\right] (\bp_1 + p_s) + p_s & = & 0, \\
	\label{side-equ2}
	\left[\frac{\bp_2}{\pmax - \bp_1 - \bp_2} - \frac{1}{n_2}\right] (\bp_2 - p_s) - p_s & = & 0.
	\end{eqnarray}
With the introduction of non-null $p_s$, this system is now not linear.
So, we begin with a study of a simplified setting where $n_1=n_2=2$.

\subsection{Study of the 2 ISPs, 2 CPs case}

\label{subsec:22}

First, define $x := \bp_1 / \pmax$, $y := \bp_2 / \pmax$ and $s := p_s / \pmax$. We can rewrite equilibrium conditions as
	\begin{eqnarray}
	\label{c1} -3x^2 - s - xy + ys - (s-1) x & = & 0, \\
	\label{c2} -3y^2 + s - xy - xs + (s+1) y & = & 0.
	\end{eqnarray}
Now, change variables to $u := x+y$ and $v := x-y$ to simplify these conditions to
	\begin{eqnarray}
	\label{e1} -2sv - 2u^2 - v^2 + u & = & 0, \\
	\label{e2} -3uv - 2s + v & = & 0,
	\end{eqnarray}
where \pref{e1} $=$ \pref{c1} $+$ \pref{c2} and \pref{e2} $=$ \pref{c1} $-$ \pref{c2}.
Equilibrium prices, demand and revenues are now solvable in closed form\footnote{These expressions are complicated and of no extra-computational interest.}. An important observation we can make at this point is given by the following theorem:

	\begin{theorem}
	\label{thm:side-threshold}
	When $n_1=n_2=2$, there is an interior NEP iff
	\begin{equation*}
	\left| \frac{p_s}{\pmax} \right|
	\ \leq \ \max_{x \in [\frac14, \frac12]} \sqrt{\frac{(1-x)(1-2x)^2(4x-1)}{36x}}
	\ \approx \ 4.64 \% .
	\end{equation*}
	\end{theorem}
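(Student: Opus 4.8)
The plan is to collapse the two first-order conditions \pref{e1}--\pref{e2} into a single scalar relation between the normalized side payment $s$ and the aggregate normalized price $u=x+y$, and then to read off existence from the range of that relation. First I would eliminate $v=x-y$: equation \pref{e2} is linear in $s$ and gives $s=\tfrac12 v(1-3u)$; substituting into \pref{e1} cancels $s$ and leaves $v^2(3u-2)=u(2u-1)$, whence
\begin{equation*}
v^2=\frac{u(1-2u)}{2-3u}\qquad\text{and}\qquad s^2=\frac{u(1-2u)(1-3u)^2}{4(2-3u)}=:f(u).
\end{equation*}
So every interior NEP obeys $s^2=f(u)$, and conversely any $u$ with $f(u)=s^2$ yields, via the $v$ and $s$ above, a solution of \pref{e1}--\pref{e2}.

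Next I would delimit the feasible range of $u$. On the relevant range $u<\tfrac23$ (so $2-3u>0$), positivity of both prices, $v^2<u^2$, reduces after clearing the positive denominator to $(3u-1)(u-1)<0$, i.e. $u>\tfrac13$; demand positivity gives $u<1$; and $f(u)\ge 0$ with $0<u<\tfrac23$ forces $u\le\tfrac12$. Thus admissible equilibria have $u\in(\tfrac13,\tfrac12)$, the branch through the neutral point $u=\tfrac12,\ s=0$. The other sign region $u\in(\tfrac23,1)$, where $f>0$ as well, is discarded because there $v^2>u^2$ and one price would be negative. I would then verify that the remaining requirements — the sign conditions $\bp_1+p_s\ge 0$ and $\bp_2-p_s\ge 0$ and the second-order (local-maximum) conditions, as in the neutral case — hold along this whole branch, so that each $u\in(\tfrac13,\tfrac12)$ is a genuine interior NEP.

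Existence then reduces to a one-variable question. On $[\tfrac13,\tfrac12]$ the function $f$ is continuous, vanishes at both endpoints and is strictly positive in between, so it attains a maximum $f_{\max}$ and, by the intermediate value theorem, assumes every value in $[0,f_{\max}]$. Hence $s^2=f(u)$ has an admissible root — equivalently an interior NEP exists — exactly when $s^2\le f_{\max}$, that is $|p_s/\pmax|\le\sqrt{f_{\max}}$; for larger $|s|$ no admissible $u$ exists and, since all interior NEPs lie on this branch, none exists. Maximizing $f$ (equivalently, after a change of variable, $g(x)=\frac{(1-x)(1-2x)^2(4x-1)}{36x}$ over $x\in[\tfrac14,\tfrac12]$) and taking the square root gives the stated bound $\approx 4.64\%$.

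The main obstacle I anticipate is the feasibility bookkeeping in the second step: showing the threshold is fixed by the fold of $f$ (where its two branches merge) and not by an earlier violation of one of the side constraints — positivity of each price and of demand, the sign conditions $\bp_1+p_s\ge 0$ and $\bp_2-p_s\ge 0$, and especially the second-order maximality conditions — and making the ``iff'' airtight in both directions: producing a genuine interior equilibrium for every $|s|$ up to the bound, while excluding any interior equilibrium (on either sign branch of $v$) beyond it.
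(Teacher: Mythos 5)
Your argument is correct and is essentially the paper's proof in different clothes: your scalar relation $s^2=f(u)$ is exactly the paper's $|s|=g(a)$ under the affine substitution $a=1-\tfrac32 u$ (so your feasible interval $u\in[\tfrac13,\tfrac12]$ maps to the paper's $a\in[\tfrac14,\tfrac12]$ and $f(u)=g(a)^2$), with the same feasibility bookkeeping (price and demand positivity) delimiting the branch. The sufficiency obstacle you flag is handled no more rigorously in the paper, which simply asserts that ``one can check with no harm that it is also sufficient.''
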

In other words, regulated side payments can only occur \emph{to a small extent} ($|p_s| < 4.64\%$ of $\pmax$), otherwise there will be {\em no interior} NEP, which means one of the two groups of players will opt out of the usage-based pricing game.

\begin{proof}
Define $a := s/v + 1/2$. Then, equation \pref{e2} yields
	\begin{equation}
	\label{u_from_a} u = \frac23 (1-a).
	\end{equation}
Expressing $u$ and $v$ as functions of $a$ in \pref{e1}, and given that $a \neq \frac12$ ($v$ is finite), we have the following necessary condition:
	\begin{equation*}
	(6s)^2 = \frac{1}{a} (1-a) (2a-1)^2 (4a-1),
	\end{equation*}
which we can rewrite as
	\begin{equation}
	\label{abs-s} |s| \ = \ \frac{|2a - 1|}{6} \sqrt{\left(\frac{1}{a} - 1\right)(4a-1)} \ =: \ g(a).
	\end{equation}
Now, let us derive necessary conditions of the domains where our variables live.
\begin{itemize}
\item $(0 < x,y < 1)$: revenues are positive;
\item $(0 < u < 1)$: demand is positive;
\item $(-1 < v < 1)$: $x$ and $y$ are in $]0,1[$;
\item $(-\frac12 < a < 0)$ or $(\frac14 \leq a \leq 1)$: comes from \pref{u_from_a} and \pref{abs-s};
\item $(\frac14 \leq a \leq 1)$: $|v| < 1$ means that $|a - 1/2| > |s|$. Replacing $|s|$ by expression \pref{abs-s} yields $(1/a - 1)(4a - 1) < 9$, which is impossible when $-\frac12 < a < 0$.
\item $(\frac14 \leq a \leq \frac12)$: suppose $a > \frac12$ and $s>0$. From expression \pref{abs-s}, we then have
	$$s> \frac13(1-a)(2a-1) \ \Rightarrow \ u < v \ \Rightarrow \ y < 0,$$
which is impossible. Similarly, $a > \frac12$ and $s<0$ would imply $x<0$.
\end{itemize}
According to \pref{abs-s}, $|s|$ cannot be greater than the maximum value of $g$ on $\left[\frac14, \frac12\right]$, which is $\approx 4.6\%$. We showed that this condition is necessary: one can check with no harm that it is also sufficient, \ie any value of $|s| \leq \max\{g(a), \ a \in [\frac14, \frac12]\}$ yields two possible values for $a$, which in turn give two set of prices $x,y$ corresponding to positive equilibrium demand and revenues.
\end{proof}

There are two solutions to \pref{side-equ1} and \pref{side-equ2}. For any solution $\vbp^* = (\bp_1^*, \bp_2^*)$ to this system,
	\begin{equation*}
	\pdd{U_{ki}}{p_{ki}}(\vbp^*)
		\ = \ \frac{-2\,U_{ki}(\vbp^*)}{(\bp_k^* + (-1)^{k+1} p_s)(\pmax - \bp_1^* - \bp_2^*)} \ < \ 0 \textrm{ for } k=1,2,
	\end{equation*}
given that $\bp_k^* + (-1)^{k+1} p_s > 0$ for $k=1,2$ (in a valid solution, players paying side payments cannot get negative revenues). Thus, both the critical points we computed are also interior Nash Equilibrium Points (NEPs).

	\begin{figure}[ht]
	\centering
	\subfloat[]{%
		\includegraphics[width=0.48\textwidth]{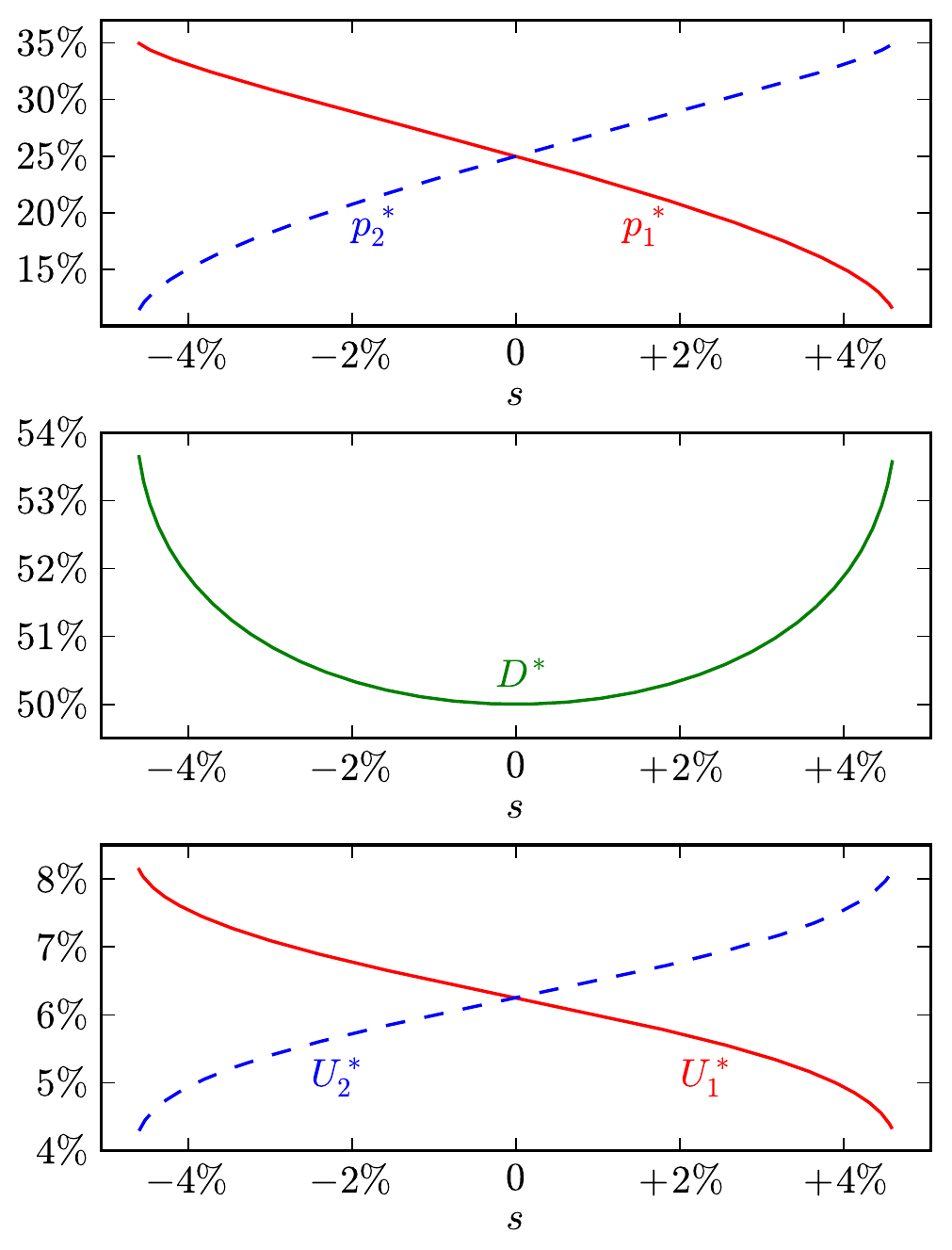}
		\label{fig:side-nep1}}
	\subfloat[]{%
		\includegraphics[width=0.48\textwidth]{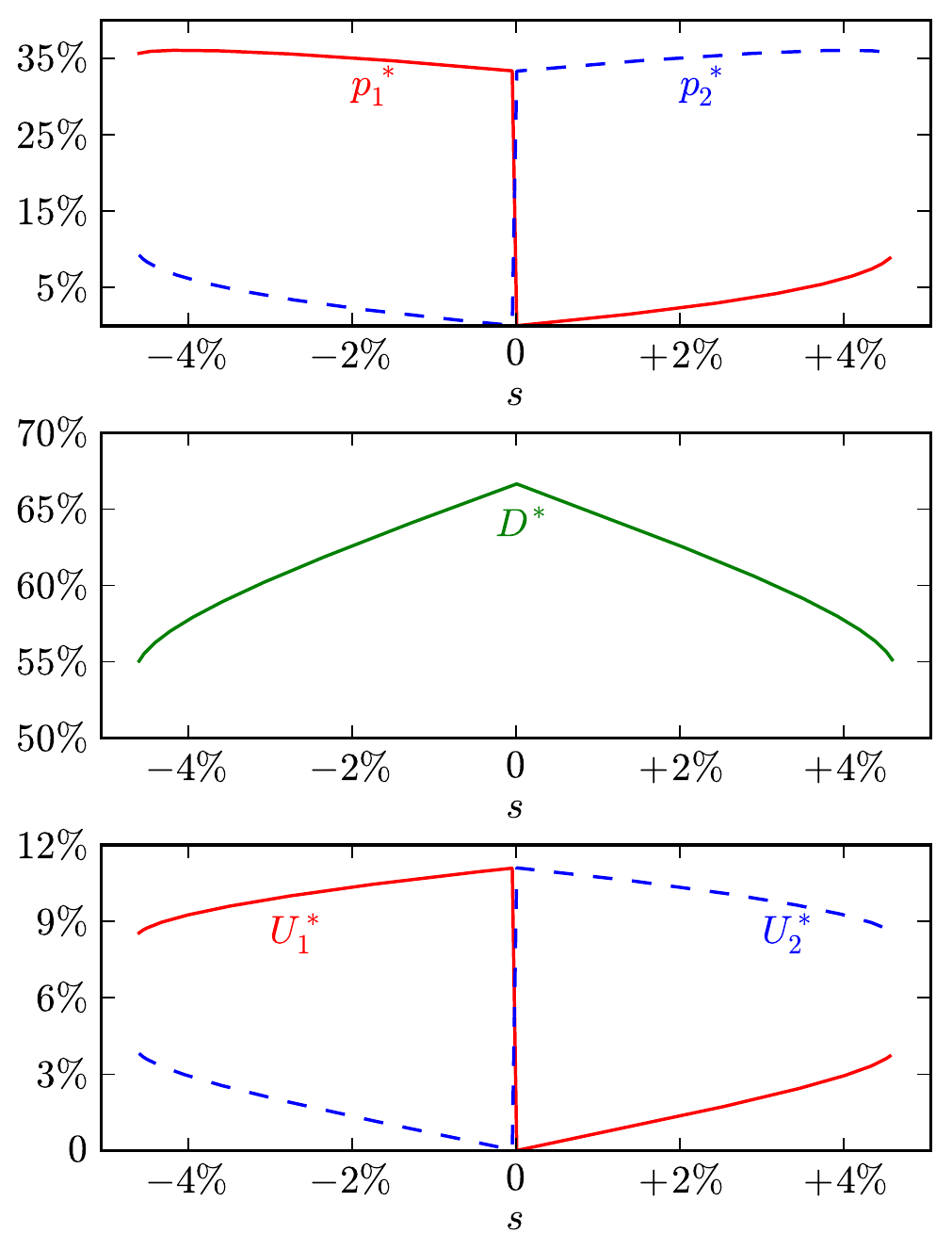}
		\label{fig:side-nep2}}
	\caption{Demand and revenues at $\NEP{1}$ and $\NEP{2}$.}
	\end{figure}

Generally, additional NEPs may exist on the boundary of the play-action space. Demand and revenues at $\NEP{1}$ and $\NEP{2}$ are shown in Figures \ref{fig:side-nep1} and \ref{fig:side-nep2}. Note that
	\begin{itemize}
	\item $\NEP{1}$ is consistent with the results of the non-discriminating setting: when $s=0$, $\bp_k^*=\pmax/4$, $D^*=\Dmax/2$ and $U_k^{i*} = \Umax/16$ for $k=1, 2$. Otherwise, $p_s$ has a rather unexpected impact on equilibrium revenues: more side payments yield \emph{decreased} revenues for those who receive them.
	\item $\NEP{2}$ does not exist when $s=0$ (there is a discontinuity in equilibrium prices at this point). Again, providers receiving side payments eventually get much less revenues than the others. Yet, unlike $\NEP{1}$, here their revenues increase with $p_s$.
	\end{itemize}
Both interior NEPs share the same ``paradox'': providers receiving side payments eventually achieve \emph{less} revenue than the others.

\subsection{Convergence to equilibrium}
\label{sec:side-conv}

Here we take $s>0$ (the roles of ISPs and CPs are swapped for $s<0$). Assume all providers act independently under a best-response behavior. Thus, the vector field $$(\bp_1, \bp_2) \mapsto \left(\pd{U_{1i}}{p_{1i}}(\bp_1, \bp_2), \pd{U_{2j}}{p_{2j}}(\bp_1, \bp_2)\right)$$ is an appropriate indicator of the aggregate ``trends'' of the system, see Figure \ref{fig:side-quiver}. 
So, if $\bp_1 > \bp_1^*(\NEP{2})$, the system is attracted by $\NEP{1}$; otherwise, unless $\bp_1$ is precisely equal to $\bp_1^*(\NEP{2})$, the system is attracted to the boundary $\NEP{B}$ (where usage-based revenues for ISPs come only from side payments), \ie $\NEP{2}$ is an unstable (saddle) point.

	\begin{figure}[ht]
	\centering
	\includegraphics[width=0.65\textwidth]{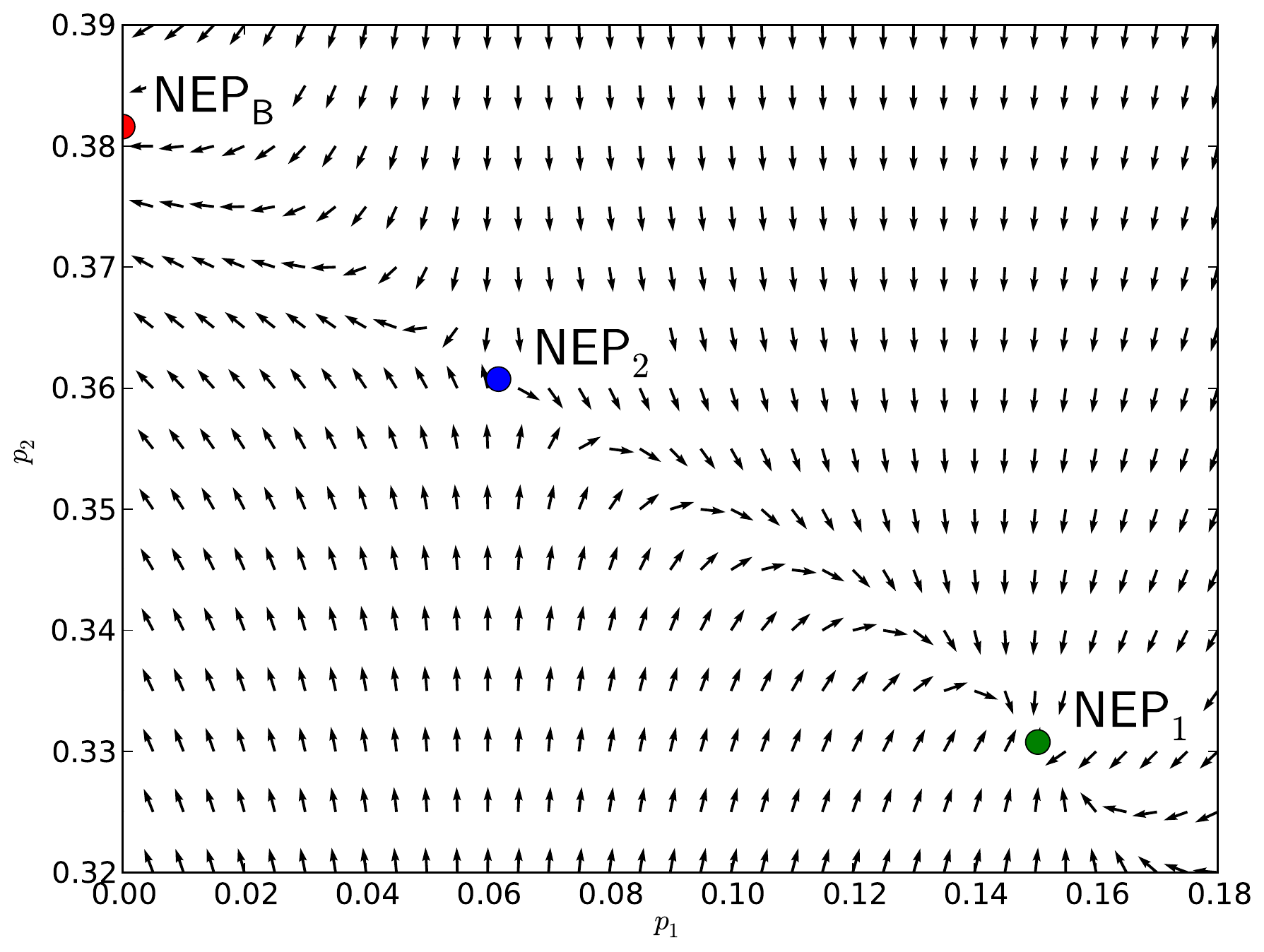}
	\caption{``Macroscopic'' trends of the system for $n_1=n_2=2$ and $s=4\%$.}
	\label{fig:side-quiver}
	\end{figure}

Solving \pref{side-equ2} with $\bp_1=0$ yields
	\begin{equation*}
	\bp_2^*(\textrm{$\NEP{B}$}) = \frac{\pmax}{6} \left(1 + s + \sqrt{s^2 + 14s + 1}\right),
	\end{equation*}
where corresponding expressions for demand and revenues follow directly. At the boundary $\NEP{B}$, demand is higher than at $\NEP{1}$ or $\NEP{2}$ while ISP revenues turn out to be lower (and CP revenues higher) than at $\NEP{2}$ (see Figure \ref{fig:side-bep}).

	\begin{figure}[t]
	\centering
	\includegraphics[width=0.6\textwidth]{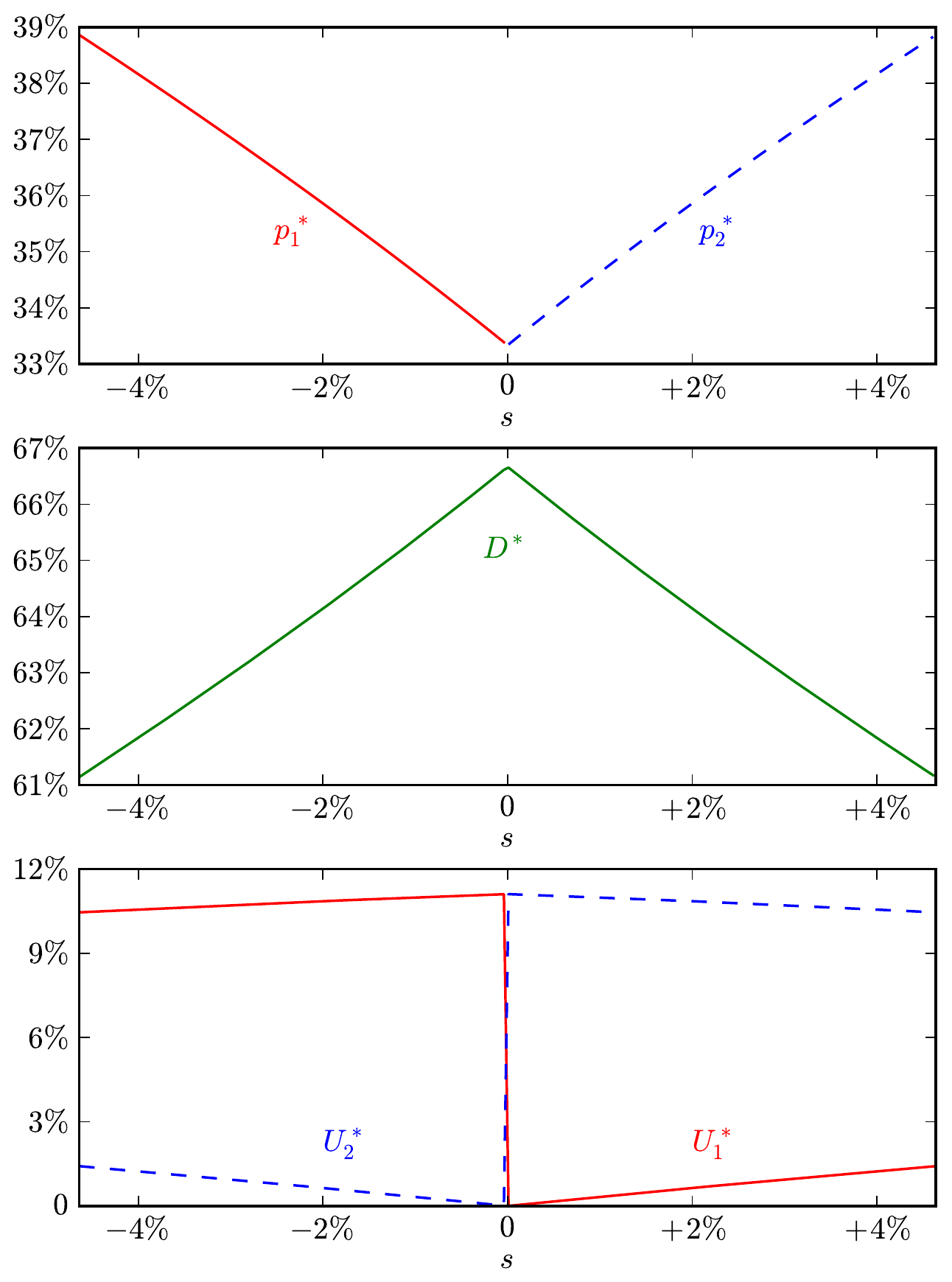}
	\caption{Demand and revenues at $\NEP{B}$.}
	\label{fig:side-bep}
	\end{figure}

\subsection{Generalization to $n_1=n$ ISPs and $n_2=n$ CPs}

When there are the same number $n$ of CPs and ISPs, necessary conditions for a NEP can be written as follows (using the change of variables of section \ref{subsec:22}):
	\begin{eqnarray*}
	2u(1 - u) - 2nsv - n (u^2 + v^2) & = & 0, \\
	nu(s-v) - 2s (n+u-1) -uv + v & = & 0.
	\end{eqnarray*}
This system is tractable for any $n \geq 2$. We computed its solutions for all $n < 10$ and observed that the conclusions of the $n=2$ study above also hold for greater values of $n$, \ie
\begin{itemize}
	\item there are two interior NEPs: one ($\NEP{2}$) is repulsive (the aggregate behavior of one group of providers is to avoid it) while the other one ($\NEP{1}$) is attractive;
	\item if side payment receivers $k$ have a mean price $\bp_k > \bp_k^*(\NEP{2})$, the system will converge to $\NEP{1}$; otherwise, it will go to a border equilibrium $\NEP{B}$ where they have a null usage-based price;
	\item equilibrium demand and revenue curves share the shape of those on Figures \ref{fig:side-nep1} and \ref{fig:side-nep2}, scaled by a factor depending on $n$.
\end{itemize}
Additional details can be found at \cite{scripts}. The number $n$ of providers impacts the maximum value of $|s|$ (see Theorem \ref{thm:side-threshold}) and scales equilibrium prices, demand and revenues. Sample figures are given in Table \ref{table:side-factors}.

	\begin{table}[ht]
	\begin{center}
	\begin{tabular}{|c|c|c|c|c|c|}
	\hline
	$n$ & $\max |s|$ & $\min \frac{D^*}{\Dmax}$ & $\max \frac{D^*}{\Dmax}$
	& $\max \frac{\bp^*}{\pmax}$ & $\max \frac{U^*}{\Umax}$ \\
	\hline
	2 & $4.7\%$ & $50\%$ & $66\%$ & $36\%$ & $11\%$ \\
	3 & $2.2\%$ & $60\%$ & $75\%$ & $27\%$ & $6.2\%$ \\
	4 & $1.3\%$ & $67\%$ & $80 \%$ & $22\%$ & $4.0\%$ \\
	5 & $0.83\%$ & $71\%$ & $83\%$ & $19\%$ & $2.8\%$ \\
	6 & $0.59\%$ & $75\%$ & $86\%$ & $16\%$ & $2.0\%$ \\
	7 & $0.44\%$ & $77\%$ & $87\%$ & $14\%$ & $1.6\%$ \\
	8 & $0.34\%$ & $80\%$ & $88\%$ & $13\%$ & $1.3\%$ \\
	9 & $0.27\%$ & $81\%$ & $90\%$ & $11\%$ & $1.0\%$ \\
	\hline
	\end{tabular}
	\caption{Impact of $n$ on equilibria properties.}
	\label{table:side-factors}
	\end{center}
	\end{table}

\subsection{Summary}

For reasonable values of the number of competing providers, we saw that the introduction of side payments in the model yielded a game with two possible outcomes:
\begin{itemize}
	\item If initial prices of side payment receivers are high enough, providers will reach an equilibrium where receivers get less revenue than payers (the higher the side payments, the lower the revenue). Yet, this is the best compromise for them.
	\item Otherwise, receivers will be constrained into setting their usage-based fees to zero, depending only on side payments for their usage-based revenues. This is the worst solution for them.
\end{itemize}
In both cases, the paradox of side payments is that they act as a \emph{handicap} for those who receive them.

\section{Application neutrality}
\label{sec:app}

Now, let us consider to what extent ISPs should be allowed to perform price discrimination depending on the application in use (\eg video chat, media streaming)? In this section, we study the impact of such discrimination in a configuration with two crude example types of applications: web surfing and P2P file sharing.

\subsection{Additional problem set-up}

We extend our model to a setting with three types of providers:
\begin{enumerate}
	\item ISPs, providing last-mile access to the Internauts,
	\item Web Content Providers (Web CPs), \eg ~ search engine portals (recall all providers of any given type are deemed identical, so we assume all Web CPs provide the same type of client-server HTTP content as well), and
	\item P2P Content Providers (P2P CPs), \eg private P2P networks operated in cooperation with copyright holders.
\end{enumerate}
Users choose an ISP, a Web CP and a P2P CP. To access web (resp. P2P) content, they pay usage-based fees to both their ISP and their Web CP (resp. P2P CP). These groups are not coalitions: in a group, each provider acts independently to maximize their own revenue.

In a neutral setting, the $\th{i}$ ISP charges a single price $p_{1i}$ for all types of traffic, while otherwise it may set up two different prices $p_{12,i}$ and $p_{13,i}$ for HTTP and P2P traffic respectively. Denote by $p_{2j}$ (resp. $p_{3j}$) the usage-based price charged by  the $\th{j}$ Web CP (resp. P2P CP). We introduce two separate demand-response profiles for the two types of content: when ISP $i$, Web CP $j$ and P2P CP $l$ are chosen, demands for HTTP and P2P content are, respectively,
	\begin{eqnarray*}
	D_2 & = & \Dkmax{2} - d_2 (p_{12,i} + p_{2j}), \\
	D_3 & = & \Dkmax{3} - d_3 (p_{13,i} + p_{3l}),
	\end{eqnarray*}
with $p_{12,i} = p_{13,i} = p_{1i}$ in the neutral setting. As previously, define $\pkmax{k} := \Dkmax{k} / d_k$.

The portion of users committed to the $\th{i}$ provider of the $\th{k}$ group is still modeled as (\ref{eqn:sigma});
we will see in \ref{app-non-neutral} how to generalize this to ISPs charging two different prices instead of one. Revenues for ISP $i$, Web CP $j$ and P2P CP $l$ are given by
	\begin{eqnarray*}
	U_{1i} & = & \sigma_{1i} \, (D_2 \, p_{12,i} + D_3 \, p_{13,i}), \\
	U_{2j} & = & \sigma_{2j} \, D_2 \, p_{2j}, \\
	U_{3l} & = & \sigma_{3l} \, D_3 \, p_{3l}.
	\end{eqnarray*}
Finally, we define the normalized sensitivity to usage-based pricing $\alpha$ and the maximum prices ratio $\gamma$,
	\begin{equation}
	\label{eqn:alpha-gamma}
	\alpha := \frac{d_2}{d_2 + d_3} \mbox{ and }  \gamma := \frac{\pkmax{2}}{\pkmax{3}},
	\end{equation}
and make the following assumptions:
	\begin{itemize}
	\item $\alpha \geq 1/2 \Leftrightarrow d_2 > d_3$: consumers are more sensitive to usage-based pricing for
	web content than for file sharing.
	\item $\gamma < 1 \Leftrightarrow \pkmax{2} < \pkmax{3}$: customers are ready to pay more for content
	exchanged on P2P sharing systems (movies, music, {\em etc.}) than for web pages.
	\end{itemize}

{

\subsection{Monopolistic players}
\label{subsec:app-monopolistic}

Now assume there is only one ISP, one Web CP and one P2P CP. This is the case when, in any group, either there is no competition or all providers have decided to form a coalition. Closed solutions are easy to derive here.

In the non-neutral setting, the ISP plays two independent ``ISP vs. CP'' games and the equilibrium is thus given by
	\begin{equation*}
	p_{1k}^* = p_k^* = \frac{\pkmax{k}}{3} \ ; \ 
	D_k^* = \frac{\Dkmax{k}}{3} \ ; \ 
	U_k^* = \frac{\Ukmax{k}}{9}
	\end{equation*}
for $k=2,3$, with $U_1^* = U_2^* + U_3^*$. See \cite{arXiv} for further discussion of the ``ISP vs. CP'' game.

In the neutral setting, the ISP has to find a compromise between the two applications, which is at the NEP
	\begin{equation}
	\label{app-monop-p1}
	p_1^* = \alpha\,\frac{\pkmax{2}}{3} + (1-\alpha)\,\frac{\pkmax{3}}{3}.
	\end{equation}
Let us define $\Delta_{\sf max} := \pkmax{3} - \pkmax{2}$ and $\mu := \frac{2}{1/d_2 + 1/d_3}$ the harmonic mean of $d_2$ and $d_3$. Equilibrium prices set by the two other providers are
	\begin{equation*}
	p_2^* = \frac{\pkmax{2}}{3} - (1-\alpha)\,\frac{\Delta_{\sf max}}{6} \ ; \ 
	p_3^* = \frac{\pkmax{3}}{3} + \alpha\,\frac{\Delta_{\sf max}}{6}.
	\end{equation*}
Neutrality yields lower demand for web content and higher demand for file sharing:
	\begin{equation*}
	D_k^* = \frac{\Dkmax{k}}{3} - (-1)^k \frac{\mu\,\Delta_{\sf max}}{12} \textrm{ for } k=2,3.
	\end{equation*}
Equilibrium revenue for the ISP is
	\begin{equation*}
	U_1^* \ = \ \frac{(\Dkmax{2} + \Dkmax{3})^2}{9 (d_2 + d_3)} \ < \ \frac{\Ukmax{2}}9 + \frac{\Ukmax{3}}{9},
	\end{equation*}
less than in the non-neutral setting. Other players' revenues are given by
	\begin{equation*}
	U_k^* = \frac{\Ukmax{k}}{18} \left(1 - (-1)^k\frac{\mu \Delta_{\sf max}}{\Dkmax{2}}\right)^2 \textrm{ for } k=2,3.
	\end{equation*}

These expressions show that, while the P2P CP is better off in a neutral setting, both the ISP and the Web CP prefer the non-neutral configuration{ (see Figure \ref{fig:app-monop})}.
	\begin{figure}[ht]
	\centering
	\includegraphics[width=0.6\textwidth]{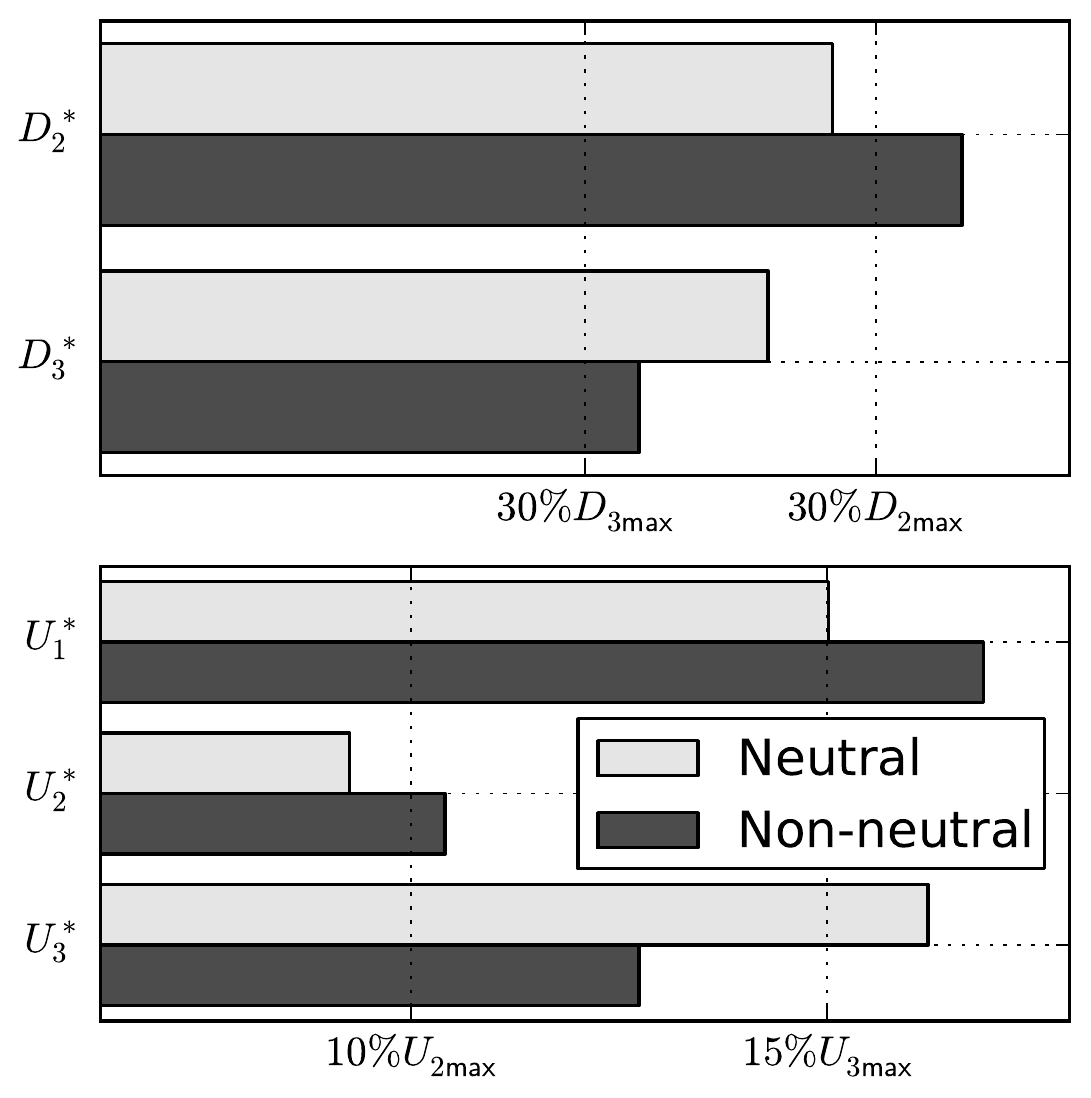}
	\caption{Impact of regulation with monopolistic providers ($\alpha=.8$ and $\gamma=.4$).}
	\label{fig:app-monop}
	\end{figure}
Also, an interesting fact to point out is that $\pkmax{3}$ cannot be too high:
	\begin{theorem}
	\label{thm:app-monop-neutral}
	In the neutral setting with monopolistic providers, there is an interior NEP iff
		\begin{equation}
		\label{app-monop-bound}
		\pkmax{3} \ < \ \left(1 + \frac{2}{1-\alpha}\right) \, \pkmax{2}.
		\end{equation}
	\end{theorem}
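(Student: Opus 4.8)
The plan is to locate the unique interior critical point of the three revenue functions through their first-order conditions, verify it is genuinely a maximum for each player, and then show that the requirement that this point be interior collapses to a single inequality equivalent to \pref{app-monop-bound}. Since the providers are monopolistic, $\sigma \equiv 1$ and the revenues simplify to $U_1 = p_1(D_2 + D_3)$, $U_2 = D_2\,p_2$, $U_3 = D_3\,p_3$, with $p_{12} = p_{13} = p_1$ in the neutral setting. First I would write the stationarity conditions $\pd{U_1}{p_1} = \pd{U_2}{p_2} = \pd{U_3}{p_3} = 0$; these read $D_2 + D_3 = (d_2 + d_3)\,p_1$, $D_2 = d_2\,p_2$ and $D_3 = d_3\,p_3$. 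The last two give $p_2 = (\pkmax{2} - p_1)/2$ and $p_3 = (\pkmax{3} - p_1)/2$, and substituting into the first recovers the closed form \pref{app-monop-p1} for $p_1^*$ together with the displayed expressions for $p_2^*$ and $p_3^*$. Because each revenue is strictly concave in its own price ($\pdd{U_1}{p_1} = -2(d_2 + d_3) < 0$, and likewise $-2d_2 < 0$ and $-2d_3 < 0$), this stationary point is the unique candidate for an interior NEP.

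Next I would translate the word ``interior'' into positivity constraints. Interiority demands $p_1^*, p_2^*, p_3^* > 0$ and $D_2^*, D_3^* > 0$; but the relations $D_k^* = d_k\,p_k^*$ show each demand is positive exactly when the matching CP price is, so the demand conditions add nothing. Moreover $p_1^* = \tfrac13\bigl(\alpha\,\pkmax{2} + (1-\alpha)\,\pkmax{3}\bigr)$ is a positive combination of positive quantities, so $p_1^* > 0$ holds automatically. The remaining conditions are $p_2^* > 0 \Leftrightarrow p_1^* < \pkmax{2}$ and $p_3^* > 0 \Leftrightarrow p_1^* < \pkmax{3}$. Invoking the standing assumption $\gamma < 1$, i.e.\ $\pkmax{2} < \pkmax{3}$, the constraint $p_1^* < \pkmax{2}$ is the stronger one and forces $p_1^* < \pkmax{3}$ for free, so the single binding requirement is $p_1^* < \pkmax{2}$.

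Finally I would unwind this inequality: using \pref{app-monop-p1}, $p_1^* < \pkmax{2}$ reads $\alpha\,\pkmax{2} + (1-\alpha)\,\pkmax{3} < 3\,\pkmax{2}$, i.e.\ $(1-\alpha)\,\pkmax{3} < (3-\alpha)\,\pkmax{2}$; dividing by $1-\alpha > 0$ and writing $(3-\alpha)/(1-\alpha) = 1 + 2/(1-\alpha)$ yields exactly \pref{app-monop-bound}. For the converse, if \pref{app-monop-bound} fails then $p_2^* \leq 0$, so the unique stationary point is not interior, and since it is the only interior critical point there is no interior NEP. The algebra here is entirely elementary; the only step that requires genuine care is the ``only if'' direction --- arguing that violating \pref{app-monop-bound} leaves \emph{no} interior equilibrium rather than merely relocating the NEP to the boundary --- which is settled by the uniqueness of the stationary point established in the first step together with the strict-positivity definition of interiority.
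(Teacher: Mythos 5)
Your proposal is correct and follows the same route as the paper, whose proof is simply the observation that \pref{app-monop-bound} is equivalent to $p_2^* > 0$ (the Web CP otherwise being forced to a zero price); you have filled in the supporting details the paper leaves implicit --- the first-order conditions, strict concavity, and the check that $p_2^* > 0$ is the only binding interiority constraint under $\gamma < 1$. The algebra checks out, so no changes are needed.
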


\begin{proof}
\pref{app-monop-bound} is equivalent to $p_2^* > 0$. If it does not hold, the Web CP will have no choice but to set his usage-based price to zero, thus opting out of the game.
\end{proof}

This condition is a consequence of the ``compromise'' sought by the ISP: if $\pkmax{3} \gg \pkmax{2}$, equation \pref{app-monop-p1} tells us $p_1^*$ will be greater than the maximum price consumers are ready to pay for web content.

}

\subsection{Neutral setting}
\label{app-neutral}
Here consider the setting of non-monopolistic provi\-ders (\ie $n_k > 1$ for $k \in \{1,2, 3 \}$) where application neutrality is enforced. In particular, $U_{1i} = \sigma_{1i} (D_2 + D_3) p_{1i}$.
Utilities' derivatives are then:
	\begin{eqnarray*}
	\pd{U_{1i}}{p_{1i}}(\bp_1, \bp_2, \bp_3)
		& = & \left[ \frac{1}{n_1 \bp_1} - \frac{1}{\alpha \wtD_2 + (1-\alpha) \wtD_3}\right] U_{1i}, \\
	\pd{U_{ki}}{p_{ki}}(\bp_1, \bp_2, \bp_3)
		& = & \left[\frac{1}{n_k \bp_k} - \frac{1}{\wtD_k}\right] U_{ki} \textrm{ for } k=2,3,
	\end{eqnarray*}
where $\wtD_k := D_k / d_k = \pkmax{k} - \bp_1 - \bp_k$ for $k=2,3$. Therefore, a NEP must be solution to the linear system:
\small
	\begin{eqnarray*}
	(n_1 + 1)\,\bp_1 + \alpha\,\bp_2 + (1-\alpha)\,\bp_3 
& = & \alpha\,\pkmax{2} + (1-\alpha)\,\pkmax{3}, \\
	\bp_1 + (n_2 + 1)\,\bp_2 & = & \pkmax{2}, \\
	\bp_1 + (n_3 + 1)\,\bp_3 & = & \pkmax{3},
	\end{eqnarray*}
\normalsize
whose resolution is straightforward.
For any solution $\vbp^* = (\bp_1^*, \bp_2^*, \bp_3^*)$ of this system,
	\begin{eqnarray*}
	\pdd{U_{1i}}{p_{1i}}(\vbp^*)
		& = & \frac{-2\,U_{1i}(\vbp^*)}{\bp_1 (\alpha \wtD_2 + (1-\alpha) \wtD_3)} \ < \ 0 \\
	\pdd{U_{ki}}{p_{ki}}(\vbp^*)
		& = & \frac{-2\,U_{ki}(\vbp^*)}{\bp_k \wtD_k} \ < \ 0 \textrm{ for } k=2,3,
	\end{eqnarray*}
ensuring that $\vbp^*$ is indeed a NEP.

\subsection{Non-neutral setting}
\label{app-non-neutral}

When application non-neutral pricing is allowed, the $\th{i}$ ISPs utility is $U_{1i} = \sigma_{1i} (D_2 p_{12,i} + D_3 p_{13,i})$, where $\sigma_{1i}$ refers to the portion of users gathered by ISP $i$ given his prices $p_{12,i}$ and $p_{13,i}$. There are different ways to generalize equation \pref{eqn:sigma} to multiple criteria: \eg one could apply $\sigma$ to the mean price $(p_{12,i} + p_{13,i})/2$ or model $\sigma_{1i}$ as a convex combination of $\sigma_{12,i}$ and $\sigma_{13,i}$. We chose
	\begin{equation}
	\label{app-non-neutral-sigma}
	\sigma_{1i} \ := \ \sigma(i, \vec{\wtp}_1) \ = \ \frac{1/\wtp_{1i}}{\sum_{j=1}^{n_1} 1/\wtp_{1j}}
	\end{equation}
where $\wtp_{1i} := \sqrt{\alpha\gamma}\,p_{12,i} + (1 - \sqrt{\alpha\gamma})\,p_{13,i}$. That is, we apply the original stickiness model \pref{eqn:sigma} to a combined price $\wtp_{1i}$ defined as a convex combination of $p_{12,i}$ and $p_{13,i}$.
This choice, particularly the geometric mean $\sqrt{\alpha\gamma}$ in \pref{app-non-neutral-sigma}, is motivated by the following considerations:
 $\sigma_{1i}$ still satisfies the properties expected for a stickiness function (see section \ref{sec:setup});
the weight of $p_{12,i}$ in the combination is increasing in $\pkmax{2}$ and $d_2$, and similarly the weight of $p_{13,i}$ is increasing in $\pkmax{3}$ and $d_3$; and
the resulting model is solvable in closed form.
	
In this model, utilities' derivatives for CPs are the same as in \ref{app-neutral} (regulations only affect the ISPs) while
	\begin{eqnarray*}
	\pd{U_{1i}}{p_{12,i}} & = & \left[\frac{\alpha\,(\wtD_2 - \bp_{12})}{\alpha \wtD_2 \bp_{12} + (1-\alpha) \wtD_3 \bp_{13}} - \left(1-\frac{1}{n_1}\right) \frac{\sqrt{\alpha\gamma}}{\wtp_{1i}}\right] U_{1i}, \\
	\pd{U_{1i}}{p_{13,i}} & = & \left[\frac{(1-\alpha)\,(\wtD_3 - \bp_{13})}{\alpha \wtD_2 \bp_{12} + (1-\alpha) \wtD_3 \bp_{13}} - \left(1-\frac{1}{n_1}\right) \frac{1-\sqrt{\alpha\gamma}}{\wtp_{1i}}\right] U_{1i}.
	\end{eqnarray*}
Thus, any Nash equilibrium satisfies:
	\begin{eqnarray*}
	\frac{\alpha}{\sqrt{\alpha\gamma}}\,(\wtD_2 - \bp_{12})
		& = & \frac{n_1-1}{n_1 \wtp_1} (\alpha \wtD_2 \bp_{12} + (1-\alpha) \wtD_3 \bp_{13}), \\
& = & \frac{1-\alpha}{1-\sqrt{\alpha\gamma}}\,(\wtD_3 - \bp_{13}), \\
	\bp_{12} + (n_2 + 1)\,\bp_2 & = & \pkmax{2}, \\
	\bp_{13} + (n_3 + 1)\,\bp_3 & = & \pkmax{3},
	\end{eqnarray*}
where $\wtp_1 := \sqrt{\alpha\gamma}\,\bp_{12} + (1 - \sqrt{\alpha\gamma})\,\bp_{13}$.
This system can be rewritten as two polynomial equations in $\bp_{12}$ and $\bp_{13}$ which are solvable in closed form. Computations yield a single admissible solution $\vbp^* = (\bp_{12}^*, \bp_{13}^*, \bp_2^*, \bp_3^*)$ here, for which
	\begin{eqnarray*}
	\pdd{U_{1i}}{p_{12,i}}(\vbp^*)
		& = & \frac{-2\alpha\,U_{1i}(\vbp^*)}{\alpha \wtD_2 \bp_{12} + (1-\alpha) \wtD_3 \bp_{13}} \ < \ 0,
	\end{eqnarray*}
and similarly for $\pdd{U_{1i}}{p_{13,i}}(\vbp^*)$ with $2(1-\alpha)$ instead of $2\alpha$. As in \ref{app-neutral}, $\pdd{U_{2j}}{p_{2j}}(\vbp^*)$ and $\pdd{U_{3l}}{p_{3l}}(\vbp^*)$ are negative as well, ensuring that $\vbp^*$ is indeed a Nash equilibrium of the game.

\subsection{Discussion of experimental results}

In{ most of} our numerical experiments, we compared revenues at this NEP with those of the neutral scenario for $\alpha = 0.8$ and $\gamma = 0.3$. We used Sage {\cite{sage}} for our computations, and all our scripts are available  at \cite{scripts}.

First, the trend we observed in subsection \ref{subsec:app-monopolistic} (ISPs and Web CPs prefer the non-neutral setting while P2P CPs benefit from neutrality regulations) also holds when the model encompasses competition{ (see Figure \ref{fig:app-competitive-revenues})}.
	\begin{figure}[ht]
	\centering
	\includegraphics[width=0.75\textwidth]{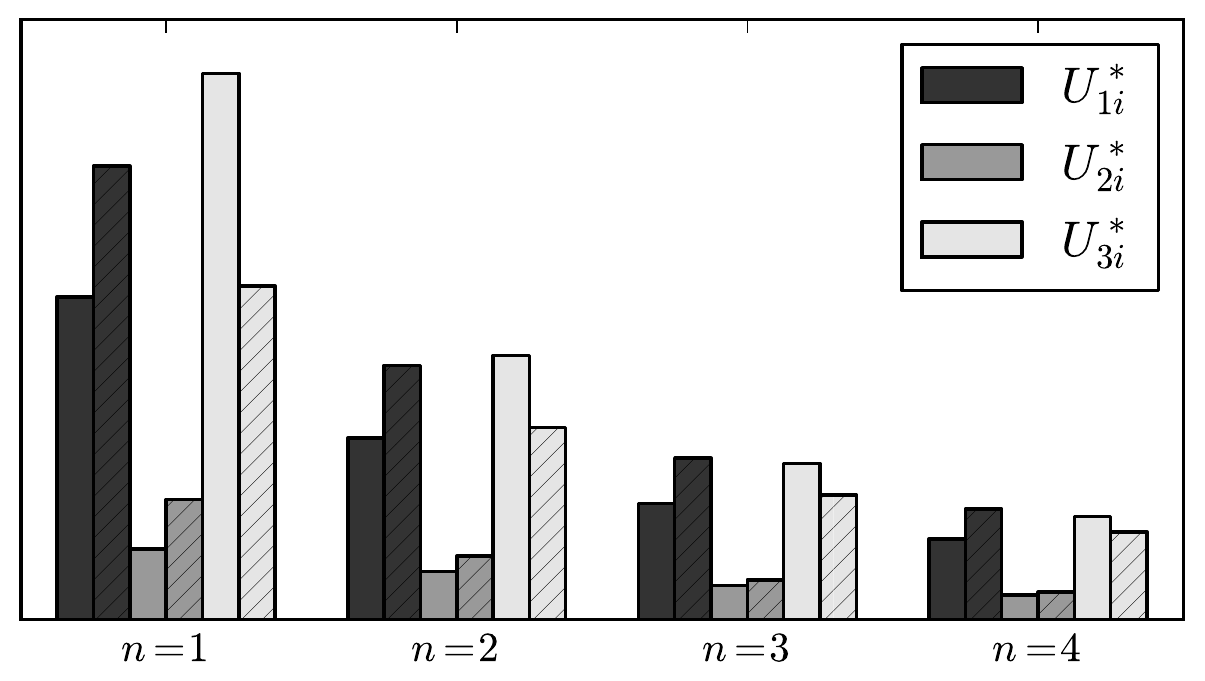}
	\caption{Revenues for providers in different settings
	($n$ is the number of providers of each type, $\alpha=0.8$ and
	$\gamma=0.3$). Plain (resp. hatched) columns correspond to
	neutral (resp. non-neutral) settings.}
	\label{fig:app-competitive-revenues}
	\end{figure}

The impact of non-neutral pricing on providers' revenues varies with competition: increased competition brings less benefit for Web CPs and less loss for P2P CPs. Yet, competition has almost no effect on the gains of ISPs (see Figure \ref{fig:app-competitive-relvar}).
	\begin{figure}[ht]
	\centering
	\includegraphics[width=0.75\textwidth]{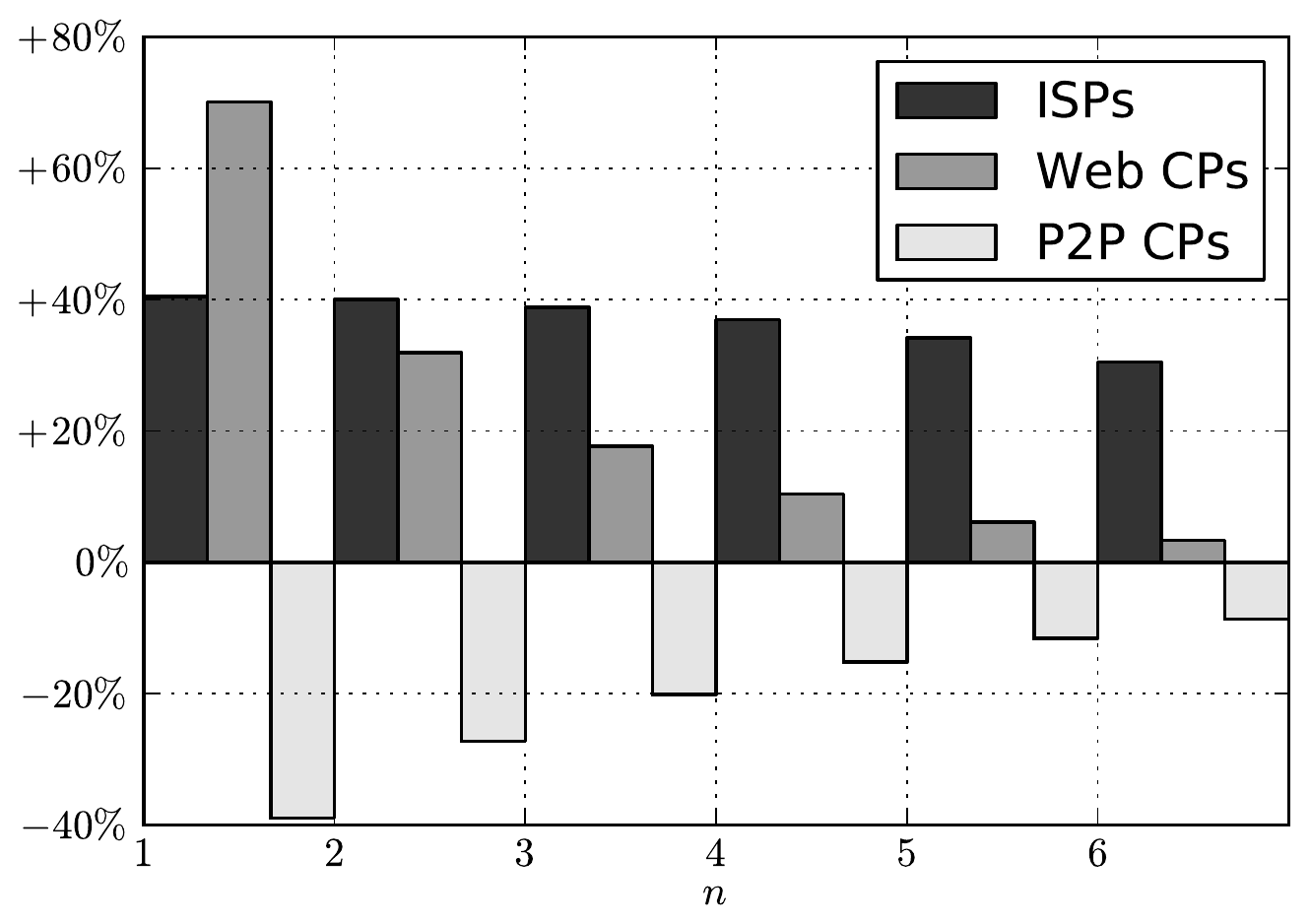}
	\caption{Relative variation in revenue, \ie the ratio of revenues (non-neutral - neutral)/neutral at the NEP, where $n$ is the number of providers of each type, $\alpha=0.8$ and $\gamma=0.3$.}
	\label{fig:app-competitive-relvar}
	\end{figure}

{We also observed a maximum price gap, as we did in subsection \ref{subsec:app-monopolistic} with Theorem \ref{thm:app-monop-neutral}, which decreases when competition increases.}

\section{Conclusions}
\label{sec:conclusion}

We presented an idealized framework to study the impact of two net-neutrality related issues, side payments and application neutrality, on the interactions among end users, ISPs and CPs. Our revenue model relied on a simple, common linear demand response to usage-based prices, and it accounted for customer loyalty.

We studied the effect of regulated side payments between the ISPs and CPs. The two possible outcomes of the competition  both showed the same paradox: side payments are actually a handicap for those who receive them insofar as they reduce Nash equilibrium revenues.

We also studied the issue of application neutrality in a simple setting involving two types of content, web content and file sharing, the latter showing lower price sensitivity and higher willingness to pay under our assumptions of relative demand sensitivity to price. Our analysis suggested that ISPs and Web CPs benefit from application non-neutral practices, while providers that enable content dissemination by P2P means are better off in a neutral setting.

{

\subsection{Future Work}
\label{subsec:fw}

Our current framework does not take into account additional advertising revenues that CPs may receive, \eg to lower their usage-based prices, or even deliver their content for free. There are multiple ways to remediate this. In \cite{arXiv}, we chose to add a fixed parameter $p_a$, so that $U_{\sf CP} = (p_{\sf CP} + p_a - p_s) D$. The underlying hypothesis is that $p_a$ is the outcome of a separate game played between advertisers and content providers. However, advertisers' willingness to pay may also depend on consumers' demand $D$, and is likely to increase significantly when targeted advertising becomes possible.

One way to remediate the shortcomings of fixed $p_a$ is to add advertisers to the game, reacting to prices $\{p_{a,i}\}$ set by CPs in a way similar to consumers, \ie with linear demand-response and advertiser stickiness. Their demand $D_a$ should be increasing with consumers' demand, and decreasing with $p_a$, \eg $D_a = D - d_a p_a$ (CPs' revenues being $U_{\sf CP} = p_a D_a + (p_{\sf CP} - p_s) D$). Also, when there are sundry types of applications, their respective levels of targeted advertising could be taken into account with $d_{a,i}$ depending on application~$i$. Though harder to solve, such a system would encompass another major component of the Internet economy.

}


\begin{thebibliography}{99} 
\small

\bibitem{arXiv}
E. Altman, P. Bernhard, S. Caron, G. Kesidis, J. Rojas-Mora and S. Wong,
``A Study of Non-Neutral Networks with Usage-based Prices'', 
in {\em Proc. Workshop on Economic Traffic Management (ETM)},
Amsterdam, Sept. 6, 2010. 
See also  \url{http://arxiv.org/abs/1006.3894}

\bibitem{tbl}
T. Berners-Lee,
``Net Neutrality: This is serious",
\emph{timbl's blog}, June 2006. \url{http://dig.csail.mit.edu/breadcrumbs/node/144}

\bibitem{att}
\emph{BusinessWeek Online},
``At SBC, It's All About `Scale and Scope'",
Nov. 5, 2005.

\bibitem{cheng} 
K. Cheng, S. Bandyopadhyay, and H. Gon, 
``The debate on net neutrality: A policy perspective'', 
\emph{Information Systems Research}, June 2008. 
Available at SSRN: \url{http://ssrn.com/abstract=959944}

\bibitem{chong}
R.B. Chong,
``The 31 Flavors of Net Neutrality'',
\emph{12 Intellectual Property Law Bulletin}, v. 12,  2008.

\bibitem{comcast}
Comcast v. FCC, 600 F.3d 642 (D.C. Cir. 2010).

\bibitem{hahn}
R. Hahn and S. Wallsten,
``The Economics of Net Neutrality'', 
\emph{Economists' Voice}, The Berkeley Economic Press, \textbf{3}(6), pp. 1-7, 2006.

\bibitem{hande}
P. Hande, M. Chiang, R. Calderbank, and S. Rangan,
``Network pricing and rate allocation with content provider participation'',
\emph{Proc. IEEE INFOCOM}, 2009.

\bibitem{ciss08} 
G. Kesidis, A. Das, and G. de Veciana, 
``On Flat-Rate and Usage-based Pricing for Tiered Commodity Internet Services'', 
\emph{Proc. CISS}, Princeton, 2008. 

\bibitem{NOMS10} 
G. Kesidis, 
``Congestion control alternatives for residential broadband access  by CMTS",
\emph{Proc. IEEE/IFIP NOMS}, Osaka, Apr. 2010. 

\bibitem{shapley1}
R.T.B. Ma, D.-M. Chiu, J.C.S. Lui, V. Misra,  and D. Rubenstein,
``Interconnecting eyeballs to content: A Shapley value perspective on ISP peering and settlement'',
\emph{Proc. Int'l Workshop on Economics of Networked Systems}, pp. 61-66, 2008.

\bibitem{shapley2} 
R.T.B. Ma, D.-M. Chiu, J.C.S. Lui, V. Misra,  and D. Rubenstein,
``On cooperative settlement between content, transit and eyeball Internet service providers'',
\emph{Proc. ACM CoNext}, 2008.

\bibitem{walrand}
J. Musacchio, G. Schwartz and J. Walrand,
``A Two-Sided Market Analysis of Provider Investment Incentives With an Application to the Net-Neutrality Issue'',
\emph{Review of Network Economics}, 2009, vol. 8, issue 1.

\bibitem{odlyzko}
A. Odlyzko,
``Network Neutrality, Search Neutrality, and the Never-ending Conflict between Efficiency and Fairness in Markets'',
\emph{Review of Network Economics} {\bf 8}(1), 2009.

\bibitem{scripts}
\url{http://www.cse.psu.edu/~kesidis/neutrality/index.html}

\bibitem{sage}
W.A. Stein {\em et~al.} \emph{{S}age {M}athematics {S}oftware ({V}ersion 4.2.1)}.
The Sage Development Team, 2009, \url{http://www.sagemath.org}.

\bibitem{wu}
T. Wu,
``Network Neutrality, Broadband Discrimination'',
\emph{Journal of Telecommunications and High Technology Law}, vol. 2, p. 141, 2003.

\end{thebibliography}
\end{document}